\newtheorem{definition}{Definition}
\newtheorem{lemma}[definition]{Lemma}
\newtheorem{theorem}[definition]{Theorem}
\title{A Note on the Faces of the Dual Koch Arrangement}
\date{February 27, 2023}
\author{
  Bernd G\"artner\footnote{Department of Computer Science, ETH Z\"urich, Switzerland. E-mail: \href{mailto:gaertner@inf.ethz.ch}{gaertner@inf.ethz.ch}} \and 
  Manuel Wettstein\footnote{Department of Computer Science, ETH Z\"urich, Switzerland. E-mail: \href{mailto:manuelwe@inf.ethz.ch}{manuelwe@inf.ethz.ch}}
}
\definecolor{colorred}{RGB}{231,76,60}
\definecolor{colorblue}{RGB}{41,128,185}
\def\mathlabel#1{\tiny$\mathbf{#1}$}
\tikzset{
  line/.style={rounded corners=6pt},
  baseline/.style={line,thick},
  point/.style={fill,circle,inner sep=1.5pt},
  basepoint/.style={point,inner sep=2pt},
  edgenumber/.style={black,opacity=0.5},
  area/.style={fill,opacity=0.12}
}
\begin{document}

  \maketitle
  
  \begin{abstract}
    We analyze the faces of the dual Koch arrangement, which is the arrangement of $2^s + 1$ lines obtained by projective duality from the Koch chain $K_s$.
    In particular, we show that this line arrangement does not contain any $k$-gons for $k > 5$, and that the number of pentagons is $3 \cdot 2^{s-1} - 3$.
  \end{abstract}
  
  \subparagraph{The Koch chain.}
  The \emph{Koch chain} $K_s$ is a set of $2^s+1$ points in the Euclidean plane, first introduced by Rutschmann and Wettstein \cite{RW22} for the purpose of establishing an improved lower bound on the maximum number of triangulations of planar point sets.
  It can be defined recursively, as follows.
  
  The first iteration of the Koch chain $K_1$, which can be seen on the left hand side of Figure~\ref{fig:chainsmall}, comprises the vertices $p_{-1},p_0,p_{+1}$ of a triangle (where here, and also later, the indices indicate the order of all points along the $x$-axis) with coordinates
  \begin{align*}
    p_{-1} = (-1,0), && p_0 = (0,-1), && p_{+1} = (1,0).
  \end{align*}
  
  To construct $K_s$ for $s > 1$, we again start by placing three vertices $p_{-2^{s-1}},p_0,p_{+2^{s-1}}$ of a triangle with the same coordinates as before, namely
  \begin{align*}
    p_{-2^{s-1}} = (-1,0), && p_0 = (0,-1), && p_{+2^{s-1}} = (1,0).
  \end{align*}
  
  To continue, two copies of $K_{s-1}$ are made arbitrarily flat along the vertical direction, and then translated and rotated in such a way that they come to lie on the edges $p_{-2^{s-1}}p_0$ and $p_0p_{+2^{s-1}}$, respectively, with all points (except for the three vertices we started with) in the interior of the initial triangle.
  The specific construction of $K_2$ can be seen on the right hand side of Figure~\ref{fig:chainsmall}, while the general case is illustrated in Figure~\ref{fig:chaingeneral}.
    
  If the two copies of $K_{s-1}$ have been made sufficiently flat, then any straight-line segment that connects a point from the left copy with a point from the right copy (where we exclude the common point $p_0$) will have no third point in its upper shadow.
  As argued in \cite{RW22}, this yields a unique \emph{order type} (see \cite{GP83} for a definition) with unavoidable edges (that is, edges that are contained in every triangulation of the point set) between any two consecutive points $p_i$, $p_{i+1}$ on the Koch chain.
  
  \begin{figure}
    \centering
    \def\scale{0.4}
    \begin{tikzpicture}[scale=\scale]
      \def\eps{0.25}
      \node[basepoint,label=left:\mathlabel{-1}] (b) at (-5,+5) {};
      \node[basepoint,label=below:\mathlabel{0}] (c) at (0,0) {};
      \node[basepoint,label=right:\mathlabel{+1}] (r) at (+5,+5) {};
      \fill[area] (c.center) -- (b.center) -- (r.center) -- (c.center);
    \end{tikzpicture}
    \hspace{2cm}
    \begin{tikzpicture}[scale=\scale]
      \def\eps{0.25}
      \node[basepoint,label=left:\mathlabel{-2}] (b) at (-5,+5) {};
      \node[basepoint,label=45:\mathlabel{-1}] (bb) at (-2.5+\eps,+2.5+\eps) {};
      \node[basepoint,label=below:\mathlabel{0}] (c) at (0,0) {};
      \node[basepoint,label=135:\mathlabel{+1}] (rr) at (+2.5-\eps,+2.5+\eps) {};
      \node[basepoint,label=right:\mathlabel{+2}] (r) at (+5,+5) {};
      \fill[area] (c.center) -- (b.center) -- (bb.center) -- (c.center);
      \fill[area] (c.center) -- (r.center) -- (rr.center) -- (c.center);
      \fill[area] (c.center) -- (b.center) -- (r.center) -- (c.center);
    \end{tikzpicture}
    \caption{
      On the left, the Koch chain $K_1$ with three points.
      On the right, the Koch chain $K_2$ with five points, where the two flattened and rotated copies of $K_1$ are clearly visible.
    }
    \label{fig:chainsmall}
  \end{figure}
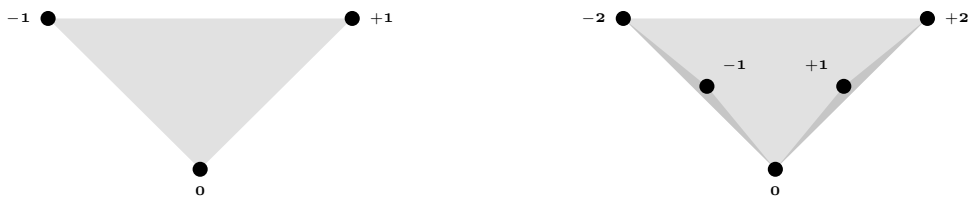
  \begin{figure}
    \centering
    \begin{tikzpicture}[scale=0.8]
      \def\eps{0.2}
      \node[basepoint,colorblue,label=left:\color{colorblue}\mathlabel{-2^{s-1}}] (b) at (-5,+5) {};
      \node[basepoint,colorblue,label=45:\color{colorblue}\mathlabel{-2^{s-2}}] (bb) at (-2.5+\eps,+2.5+\eps) {};
      \node[basepoint,black,label=below:\mathlabel{0}] (c) at (0,0) {};
      \node[basepoint,colorred,label=135:\color{colorred}\mathlabel{+2^{s-2}}] (rr) at (+2.5-\eps,+2.5+\eps) {};
      \node[basepoint,colorred,label=right:\color{colorred}\mathlabel{+2^{s-1}}] (r) at (+5,+5) {};
      \def\midpoints#1#2#3{
        \node[point,#1] at ($0.25*(#2)+0.75*(#3)$) {};
        \node[point,#1] at ($0.5*(#2)+0.5*(#3)$) {};
        \node[point,#1] at ($0.75*(#2)+0.25*(#3)$) {};
      }
      \midpoints{colorblue}{b}{bb}
      \midpoints{colorblue}{bb}{c}
      \midpoints{colorred}{c}{rr}
      \midpoints{colorred}{rr}{r}
      {[on background layer]\fill[area] (c.center) -- (b.center) -- (r.center) -- (c.center);}
      \fill[colorblue,area] (c.center) -- (b.center) -- (bb.center) -- (c.center);
      \fill[colorred,area] (c.center) -- (r.center) -- (rr.center) -- (c.center);
      \node at (-3,2) {$\color{colorblue}\overline{K_{s-1}}$};
      \node at (+3,2) {$\color{colorred}\overline{K_{s-1}}$};
    \end{tikzpicture}
    \caption{
      The Koch chain $K_s$ with $2^s+1$ points, which contains two sufficiently flattened and rotated copies of $K_{s-1}$, denoted by $\overline{K_{s-1}}$.
    }
    \label{fig:chaingeneral}
  \end{figure}
  
  \subparagraph{The dual Koch arrangement.}
  By the standard point-line duality transformation
  \begin{align*}
    && \text{point $p = l^* = (a,b)$} && \longleftrightarrow && \text{line $l = p^* \colon y = ax - b$}, &&
  \end{align*}
  every point set in the plane corresponds to a line arrangement of the same size, and vice versa.
  In particular, this holds true for the Koch chain $K_s$, which thus gives rise to the \emph{dual Koch arrangement}, denoted by $K^*_s$.
  Realizations of $K^*_1$ and $K^*_2$ as $x$-monotone pseudo-line arrangements (this makes it easier to see all the faces more clearly) are given in Figure~\ref{fig:arrangementsmall}, while the general case is illustrated in Figure~\ref{fig:arrangementgeneral}.
  
    \begin{lemma}
      \label{lem:faces}
      The faces of the dual Koch arrangement $K^*_s$ can be categorized as follows:
      \begin{itemize}
        \item The unique infinite face without upper boundary has three edges.
        \item The unique infinite face without lower boundary has two edges.
        \item Among the remaining infinite faces that are unbounded towards the left, there are $s-1$ faces with four edges, while all the others have either two or three edges.
        \item Among the remaining infinite faces that are unbounded towards the right, there are $s-1$ faces with four edges, while all the others have either two or three edges.
        \item Among the remaining finite faces, there are $3 \cdot 2^{s-1} - 2s - 1$ faces with five edges, while all the others have either three or four edges.
      \end{itemize}
    \end{lemma}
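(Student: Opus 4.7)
I plan to prove the lemma by induction on $s$, exploiting the recursive construction of $K_s$ and its implications for $K^*_s$.

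\textbf{Setup and base case.} Write $L_- := (y=-x)$, $L_0 := (y=1)$, and $L_+ := (y=x)$ for the duals of the three corner points $p_{\pm 2^{s-1}}, p_0$ of $K_s$; these meet pairwise at $(-1,1)$, $(0,0)$, $(+1,1)$ and bound the ``main triangle'' $T$. The two flattened copies $\overline{K_{s-1}}^L$ and $\overline{K_{s-1}}^R$ of $K_{s-1}$ contained in $K_s$ dualize to sub-arrangements $A^L, A^R$ of $K^*_s$, each with $2^{s-1}+1$ lines, each combinatorially equivalent to $K^*_{s-1}$, and sharing only $L_0$. By choosing the flattening small enough, every internal vertex of $A^L$ lies in an arbitrarily small neighborhood $N_L$ of $(-1,1)$, and those of $A^R$ lie in a small neighborhood $N_R$ of $(+1,1)$. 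The base case $s=1$ is handled by direct enumeration of the seven faces of the three-line arrangement, yielding a top face with three edges, a bottom face with two, two left-unbounded faces with two and three edges, their right-unbounded mirrors, and one interior triangle.

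\textbf{Top and bottom faces.} These I handle via the standard duality between envelopes of line arrangements and convex hulls of points. Since $K_s \subseteq T$, its lower convex hull has exactly three vertices $p_{\pm 2^{s-1}}, p_0$, so the top face has three edges; its upper hull has exactly two vertices $p_{\pm 2^{s-1}}$, so the bottom face has two edges.

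\textbf{Other faces.} I would split the remaining faces into those whose boundary lies entirely in $N_L \cup N_R$ (local faces) and those whose boundary extends outside (interaction faces). The finite local faces of $K^*_s$ coincide with the finite faces of $A^L$ and $A^R$, since no other line of $K^*_s$ enters $N_L$ or $N_R$; so by induction each copy contributes $3 \cdot 2^{s-2} - 2(s-1) - 1$ pentagons with all other local finite faces being triangles or quadrilaterals. For the interaction faces, including all side-unbounded faces and finite faces crossing between the two pencil regions, the boundary is determined by how $L_-, L_0, L_+$ together with the extensions of the $A^L$- and $A^R$-lines interact outside $N_L \cup N_R$. A careful wiring-diagram analysis should show: (i) each copy contributes $s-2$ four-edged side-unbounded faces on its own side (other left-unbounded faces of a copy remain two- or three-edged, possibly refined by $A^R$-lines but never gaining a fourth edge); (ii) exactly one additional four-edged left-unbounded face emerges from $L_+$ sweeping across the left pencil, and symmetrically on the right; and (iii) exactly $2s - 3$ new pentagons appear in the interaction region, clustered near the bottom vertex $(0,0)$ of $T$ and along its two oblique sides. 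Summing gives $(s-2)+1 = s-1$ four-edged side-unbounded faces and $2(3 \cdot 2^{s-2} - 2(s-1) - 1) + (2s - 3) = 3 \cdot 2^{s-1} - 2s - 1$ finite pentagons, as claimed.

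\textbf{Main obstacle.} The delicate step is the interaction analysis: I must verify that no face acquires six or more edges, that exactly one new four-edged side-unbounded face appears on each side, and that precisely $2s - 3$ new pentagons arise. A subtle point is that some of $A^L$'s left-unbounded faces (viewed as faces of the standalone sub-arrangement) get cut by the $A^R$-lines and $L_+$ inside $K^*_s$; I would need to show that the four-edged property is nevertheless preserved for the inductively inherited faces, essentially because the cutting lines enter them only at their unbounded end and split off triangular or quadrilateral pieces without changing the overall count. This bookkeeping, likely best presented with reference to a schematic of $K^*_s$ as two pencils at $(\pm 1, 1)$ interacting with $L_\pm, L_0$ through $T$, is where the bulk of the proof's effort should be concentrated.
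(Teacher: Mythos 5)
Your overall strategy is the same as the paper's: induct on $s$ using the decomposition of $K^*_s$ into two flipped copies of $K^*_{s-1}$ plus an interaction region, and your pentagon count $2N_{s-1} + (2s-3)$ is literally the paper's recurrence $N_s = 2N_{s-1} + 1 + 2(s-2)$. The problem is that the three assertions you label (i)--(iii) \emph{are} the inductive step of the lemma, and you leave them unproven (``a careful wiring-diagram analysis should show\dots''), explicitly flagging them as the main obstacle. The paper discharges exactly this step by inspecting the recursive structure of the arrangement (its Figure~3): one must check, face by face, that the copies' unbounded faces opening away from the centre survive unchanged, that the copies' unbounded faces opening towards the centre are each closed off by \emph{exactly one} new edge (so quadrilaterals become pentagons and nothing ever reaches six edges), and that the newly created faces in the middle consist of one pentagon, one quadrilateral per side, and otherwise only faces with at most four edges. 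Until that bookkeeping is actually carried out, your text is a proof plan with the correct arithmetic, not a proof.

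Moreover, the one place where you commit to a concrete structural claim suggests the deferred analysis has not been thought through: the single additional four-edged left-unbounded face does not ``emerge from $L_+$ sweeping across the left pencil.'' Already in $K^*_2$ (right-hand side of Figure~2 of the paper) that quadrilateral is the face between $L_0$ and $p_{+1}^*$ (the line of the right copy with smallest slope, which is the one adjacent to $L_0$ in the vertical order as $x \to -\infty$), and its two remaining edges lie on duals of \emph{left-copy} points; $L_+ = p_{+2^{s-1}}^*$ is the bottommost line at far left and bounds the two-edge face without lower boundary, so it cannot supply the new quadrilateral. Likewise, your description of the $2s-3$ new pentagons as ``clustered near the bottom vertex of $T$'' misses the actual mechanism you would need to verify: one central pentagon bounded by $L_0$ and four lines (in $K^*_2$ it is bounded by all five lines), plus the $2(s-2)$ pentagons obtained from the copies' inherited unbounded quadrilaterals that open towards the opposite side and gain exactly one edge. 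Also note that, contrary to your parenthetical in (i), the copies' outward-opening unbounded faces are not ``refined'' by the other copy's lines at all (with sufficient flattening those lines never enter them); it is only the inward-opening faces that interact. Fixing these identifications and then actually performing the face-by-face verification is precisely the ``careful inspection'' the paper's proof consists of.
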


  \begin{figure}[b]
    \centering
    \begin{tikzpicture}[scale=0.3,yscale=-1,xscale=0.8]
      \draw[baseline] (-12, 0) node [label=left:\mathlabel{0}]  {} -- (+12, 0) node [label=right:\mathlabel{0}]  {};
      \draw[baseline] (-12,+4) node [label=left:\mathlabel{+1}] {} -- (-2,+4) -- (+10,-2) -- (+12,-2) node [label=right:\mathlabel{+1}] {};
      \draw[baseline] (-12,-2) node [label=left:\mathlabel{-1}] {} -- (-10,-2) -- (+2,+4) -- (+12,+4) node [label=right:\mathlabel{-1}] {};
      
      \node[edgenumber] at (0,1.5) {$3$};
      \node[edgenumber] at (0,-2) {$3$};
      \node[edgenumber] at (0,5) {$2$};
      \node[edgenumber] at (-11,-1) {$2$};
      \node[edgenumber] at (-11,2) {$3$};
      \node[edgenumber] at (11,-1) {$2$};
      \node[edgenumber] at (11,2) {$3$};
    \end{tikzpicture}
    \hspace{1cm}
    \begin{tikzpicture}[scale=0.3,yscale=-1,xscale=0.8]
      \draw[baseline] (-12, 0) node [label=left:\mathlabel{0}]  {} -- (+12, 0) node [label=right:\mathlabel{0}]  {};
      
      \draw[baseline] (-12,+3) node [label=left:\mathlabel{+1}] {} -- (-2,+3) -- (+2,+1) -- (+6,+1) -- (+10,-1) -- (+12,-1) node [label=right:\mathlabel{+1}] {};
      \draw[baseline] (-12,+4) node [label=left:\mathlabel{+2}] {} -- (-2,+4) -- (+2,+2) -- (+10,-2) -- (+12,-2) node [label=right:\mathlabel{+2}] {};
      
      \draw[baseline] (-12,-1) node [label=left:\mathlabel{-1}] {} -- (-10,-1) -- (-6,+1) -- (-2,+1) -- (+2,+3) -- (+12,+3) node [label=right:\mathlabel{-1}] {};
      \draw[baseline] (-12,-2) node [label=left:\mathlabel{-2}] {} -- (-10,-2) -- (-2,+2) -- (+2,+4) -- (+12,+4) node [label=right:\mathlabel{-2}] {};
      
      \node[edgenumber] at (0,0.75) {$5$};
      \node[edgenumber] at (0,-2) {$3$};
      \node[edgenumber] at (0,5) {$2$};
      \node[edgenumber] at (-11,1.5) {$4$};
      \node[edgenumber] at (11,1.5) {$4$};
    \end{tikzpicture}
    \caption{
      On the left, the dual Koch arrangement $K^*_1$ with three lines.
      On the right, the dual Koch arrangement $K^*_2$ with five lines.
      The number of edges of some finite and infinite faces is indicated in gray.
    }
    \label{fig:arrangementsmall}
  \end{figure}
  \begin{figure}
    \centering
    \begin{tikzpicture}[scale=0.8,yscale=-1,xscale=0.8]
      \draw[baseline] (-10, 0) node [label=left:\mathlabel{0}]  {} -- (+10, 0) node [label=right:\mathlabel{0}]  {};
      
      \def\lrline#1#2#3#4{
        \draw[#1,colorred] (-10,+2.0+#2) node [label=left:\mathlabel{#3}] {} -- (-2,+2.0+#2) -- (+2,+0.0+#2) -- (+4,+0.0+#2);
        \draw[#1,colorred] (+8,-0.0-#2) -- (+10,-0.0-#2) node [label=right:\mathlabel{#4}] {};
      }
      
      \lrline{line}{0.1}{}{}
      \lrline{line}{0.2}{}{}
      \lrline{line}{0.3}{}{}
      \lrline{line}{0.4}{}{}
      \lrline{baseline}{0.5}{+2^{s-2}}{+2^{s-2}}
      \lrline{line}{0.6}{}{}
      \lrline{line}{0.7}{}{}
      \lrline{line}{0.8}{}{}
      \lrline{line}{0.9}{}{}
      \lrline{baseline}{1.0}{+2^{s-1}}{+2^{s-1}}
      
      \draw[baseline,colorred] (+4,+1.0) -- (+4.5,+1.0) -- (+7.0,-1.0) -- (+8,-1.0);
      \draw[baseline,colorred] (+4,+0.5) -- (+6.25,+0.5) -- (+7.5,-0.5) -- (+8,-0.5);
      
      \def\rlline#1#2#3#4{
        \draw[#1,colorblue] (+10,+2.0+#2) node [label=right:\mathlabel{#3}] {} -- (+2,+2.0+#2) -- (-2,+0.0+#2) -- (-4,+0.0+#2);
        \draw[#1,colorblue] (-8,-0.0-#2) -- (-10,-0.0-#2) node [label=left:\mathlabel{#4}] {};
      }
      
      \rlline{line}{0.1}{}{}
      \rlline{line}{0.2}{}{}
      \rlline{line}{0.3}{}{}
      \rlline{line}{0.4}{}{}
      \rlline{baseline}{0.5}{-2^{s-2}}{-2^{s-2}}
      \rlline{line}{0.6}{}{}
      \rlline{line}{0.7}{}{}
      \rlline{line}{0.8}{}{}
      \rlline{line}{0.9}{}{}
      \rlline{baseline}{1.0}{-2^{s-1}}{-2^{s-1}}
      
      \draw[baseline,colorblue] (-4,+1.0) -- (-4.5,+1.0) -- (-7.0,-1.0) -- (-8,-1.0);
      \draw[baseline,colorblue] (-4,+0.5) -- (-6.25,+0.5) -- (-7.5,-0.5) -- (-8,-0.5);
      
      \def\subbox#1#2#3#4{
        \draw[ultra thick,#3,opacity=0.6] (#1,-1.5) rectangle (#2,+1.5);
        \node at (0.5*#1+0.5*#2,-2.0) {{\color{#3}#4}};
      }
      
      \subbox{-8}{-4}{colorblue}{$\overline{K^*_{s-1}}$}
      \subbox{+4}{+8}{colorred}{$\overline{K^*_{s-1}}$}
      
      \node[edgenumber] at (0,0.5) {$5$};
      \node[edgenumber] at (0,-1) {$3$};
      \node[edgenumber] at (0,3) {$2$};
      
      \node[edgenumber] at (-9,1) {$4$};
      \node[edgenumber] at (+9,1) {$4$};
      
      \node[edgenumber] at (-6,1) {$3$};
      \node[edgenumber] at (-6,-1) {$2$};
      
      \node[edgenumber] at (+6,1) {$3$};
      \node[edgenumber] at (+6,-1) {$2$};
      
    \end{tikzpicture}
    \caption{
      The dual Koch arrangement $K^*_s$ with $2^s+1$ lines, which contains two flipped copies of the dual Koch arrangement $K^*_{s-1}$.
      The number of edges of some key faces is again indicated in gray.
    }
    \label{fig:arrangementgeneral}
  \end{figure}
  
  \begin{proof}
    The statement of the lemma can be verified for $K^*_1$ (and also $K^*_2$, of course) by a simple but careful inspection of Figure~\ref{fig:arrangementsmall}. 
    
    For larger values of $s$, the statement follows inductively by a careful inspection of all the newly created faces in Figure~\ref{fig:arrangementgeneral}.
    The only tricky part is verifying the formula for the number $N_s$ of finite pentagons in $K^*_s$.
    However, after noting that it must satisfy the recurrence $N_s = 2N_{s-1} + 1 + 2(s-2)$, this boils down to an elementary calculation.
  \end{proof}
  
  \begin{theorem}
    \label{thm:facesprojective}
    For all $s \geq 3$, the line arrangement $K^*_s$ in the projective plane has $3 \cdot 2^{s-1} - 3$ pentagons, while all other faces have either three or four edges.
  \end{theorem}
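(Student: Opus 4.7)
The plan is to deduce the theorem from Lemma~\ref{lem:faces} by analyzing how the faces of the affine arrangement merge when we pass to the projective plane. The key observation is that the bounded affine faces are unchanged, while the $2n$ unbounded affine faces (with $n = 2^s + 1$) merge pairwise into $n$ projective faces: since the lines of $K^*_s$ have pairwise distinct slopes, the line at infinity is subdivided by them into $n$ arcs, and the pair of affine unbounded faces adjacent to each pair of antipodal arcs forms one projective face. If the two merged faces have $a$ and $b$ edges, then the merged projective face has $a + b - 2$ edges, because the four rays at infinity pair up into two projective edges.

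I would then classify the $n$ mergers using Lemma~\ref{lem:faces}. The arc crossing the vertical direction pairs the face without upper boundary ($3$ edges) with the face without lower boundary ($2$ edges), giving a triangle. Each of the remaining $2^s$ arcs lies between two consecutive lines $\ell_i, \ell_{i+1}$ in slope order, and pairs the leftward wedge and the rightward wedge between those two lines.

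The main obstacle is verifying that every $4$-edge leftward wedge is paired with a $3$-edge rightward wedge, and symmetrically on the other side. This prevents hexagons from appearing (a $4 + 4$ merger would yield one) and produces exactly $2(s-1)$ additional pentagons from the mergers. My plan is to exploit the reflection symmetry $x \mapsto -x$ of the Koch chain, which induces the involution $\ell_i \leftrightarrow \ell_{-i}$ on the dual lines, and combine it with an induction on $s$ that tracks the positions of $4$-edge wedges through the recursive structure shown in Figure~\ref{fig:arrangementgeneral}. A $2 + 2$ merger (which would produce a digon) can be ruled out on general grounds, since a generic projective arrangement of at least three lines contains no digons.

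Combining the $3 \cdot 2^{s-1} - 2s - 1$ affine pentagons from Lemma~\ref{lem:faces} with the $2(s-1)$ pentagons arising from the $4 + 3$ mergers then gives the asserted total of $3 \cdot 2^{s-1} - 3$ pentagons. All remaining projective faces are either preserved affine $3$- or $4$-gons, the top-bottom triangle, or mergers of two wedges with edge counts in $\{2, 3\}$, and hence triangles or quadrilaterals.
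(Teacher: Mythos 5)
Your overall route is the same as the paper's: take the $3 \cdot 2^{s-1} - 2s - 1$ bounded pentagons from Lemma~\ref{lem:faces} and account for what happens to the unbounded faces at the line at infinity. Your description of the merging is correct and in fact more explicit than the paper's: bounded faces are untouched, the $2n$ unbounded faces (with $n = 2^s+1$) merge in antipodal pairs, a pair with $a$ and $b$ edges yields a projective face with $a+b-2$ edges, the top and bottom faces combine to a triangle, and every other pair consists of the leftward and rightward wedge between the same two consecutive-slope lines; the exclusion of digons and the final count $(3\cdot 2^{s-1}-2s-1)+2(s-1)=3\cdot 2^{s-1}-3$ are also fine.

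However, the step you yourself call the main obstacle --- that for $s \ge 3$ every $4$-edge wedge is paired with a $3$-edge wedge, rather than with a $2$-edge wedge or with another $4$-edge wedge --- is precisely the content of the theorem beyond Lemma~\ref{lem:faces}, and you only announce a strategy for it (the reflection $\ell_i \leftrightarrow \ell_{-i}$ induced by the symmetry of the chain, plus an induction tracking where the $4$-edge wedges sit in the recursion of Figure~\ref{fig:arrangementgeneral}) without carrying it out. The symmetry is legitimate, but it merely translates the claim into one about left wedges at mirrored slope positions; the induction still has to locate the $s-1$ four-edge wedges, and its base case must be $s=3$, because the pairing genuinely fails at $s=2$ (as the paper notes right after the theorem, there the $4$-edge faces meet $2$-edge faces and give quadrilaterals), so any such induction has to explain what changes between $s=2$ and $s=3$. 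To be fair, the paper's own proof is equally terse --- it simply asserts that each of the $2(s-1)$ four-edge infinite faces yields one extra pentagon, leaving the verification to inspection of the recursive figure --- so you have isolated exactly the right claim, but as written your proposal sketches rather than proves it.
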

  \begin{proof}
    We already have $3 \cdot 2^{s-1} - 2s - 1$ pentagons from Lemma~\ref{lem:faces}.
    In addition, when embedding $K^*_s$ in the projective plane, the $2(s-1)$ infinite faces with four edges yield one additional pentagon each.
  \end{proof}
  
  It is important to note that the last step of the reasoning in the proof of Theorem~\ref{thm:facesprojective} fails for the case $s=2$ because, as can be seen on the right hand side of Figure~\ref{fig:arrangementsmall}, the infinite faces of $K^*_2$ with four edges match up with an infinite face with only two edges on the other side, thus combining to a tetragon instead of a pentagon.
  By a sheer stroke of luck, however, $K^*_1$ does not have any infinite faces with four edges at all, and hence the formula from Theorem~\ref{thm:facesprojective} also applies to the case $s=1$.
  
  \subparagraph{An open problem.}
  Given that the Koch chain $K_s$ is the currently best candidate for maximizing the number of planar triangulations, and given that its dual line arrangement $K^*_s$ contains only small faces, one might hope that there is some kind of deeper connection between these two attributes.
  However, we are currently not aware of any such connection and it is unclear to us how to even approach such a question.
  
  \subparagraph{Acknowledgements.}
  The study of the faces of the dual Koch arrangement was initiated by a question posed by Emo Welzl at Gremo's Workshop on Open Problems (GWOP) in Binn, Switzerland, June 2022.
  The question was whether there exist arbitrarily large line arrangements in the projective plane for which the size of the largest face can be bounded by a constant $k$.
  While the dual Koch arrangement indeed answers this question affirmatively and optimally for $k=5$, there already exists at least one other generic construction that achieves the same goal and which is described in Gr\"unbaum's book \cite{Gr72}.
  The authors would like to thank Jean Cardinal, Christoph Grunau and Emo Welzl for helpful discussions during the workshop.

\end{document}